\title{$O(\log \log n)$ Worst-Case Local Decoding and Update  Efficiency for Data Compression}
\author{
	\IEEEauthorblockN{
		Shashank Vatedka\IEEEauthorrefmark{1}, 
		Venkat Chandar\IEEEauthorrefmark{2}, 
		Aslan Tchamkerten\IEEEauthorrefmark{3}
	}
	\IEEEauthorblockA{
		\IEEEauthorrefmark{1}Dept.\ of Electrical Engineering, Indian Institute of Technology Hyderabad, India\\
		\IEEEauthorrefmark{2}DE Shaw, New York, USA\\
		\IEEEauthorrefmark{3}Dept.\ of Communications and Electronics, Telecom Paris, France 
	}
}
\newcommand{\poly}{\mathrm{poly}}
\begin{document}
\maketitle

\begin{abstract}
  This paper addresses the problem of data compression with local decoding and local update. A compression scheme has worst-case local decoding $ \rwc $ if any bit of the raw file can be recovered by probing at most $ \rwc $ bits of the compressed sequence, and has update efficiency of $\twc  $ if a single bit of the raw file can be updated by modifying at most $ \twc $ bits of the compressed sequence. This article provides an entropy-achieving compression scheme for memoryless sources that simultaneously achieves $ O(\log\log n) $ local decoding and update efficiency. Key to this achievability result is a novel succinct data structure for sparse sequences which allows efficient local decoding and local update.
  
  Under general assumptions on the local decoder and update algorithms, a converse result shows that $ \rwc $ and $ \twc $ must grow as $ \Omega(\log\log n) $.
\end{abstract}

\section{Introduction}
Consider a source sequence $ X^n $ with independent and identically distributed (i.i.d.) components having probability mass function $ p_X $ on a finite alphabet $ \cX $. For simplicity we assume here that $ p_X $ is a known distribution over $ \cX=\{0,1\} $.\footnote{This can be generalized to the scenario where $ p_X $ is unknown to the encoder and decoder by first estimating $ p_X $ and then using this for designing the compression scheme as in \cite{vatedka2019local}. Similarly, our results can be generalized to nonbinary alphabets.}

 A fixed-length compression scheme of rate $ R $ consists of a pair of algorithms, the encoder $\enc$ and the decoder $\dec$. The encoder maintains for every $X^n$, a codeword $C^{nR}\in\{0,1\}^{nR}$ such that $\dec(C^{nR})$ is a good estimate of $X^n$. 
 The probability of error of the compression scheme is defined as 
\[
\peglob\coloneq \Pr[\dec(C^{nR})\neq X^n].
\]
From the source coding theorem, we know that there exist sequences of codes with rate arbitrarily close to the entropy $ H(p_X) $ and error probability vanishing in $ n $.

Our goal is to design a fixed-length compression scheme that additionally supports local encoding and decoding. A locally decodable and updatable compression scheme consists of a global encoder and decoder pair $ (\enc,\dec) $ and in addition, a local decoder and a local updater:
\begin{itemize}
	\item {\bf Local decoder:} A local decoder is an algorithm which given $ i\in[n] $, probes (possibly adaptively) a small number of bits of $ C^{nR} $ to output $ \widehat{X}_i $. Here, $ \widehat{X}_i $ is the $ i $th symbol of $\widehat{X}^n\defeq \dec(C^{nR}) $. The worst-case local decodability $ \rwc $ of the scheme is the maximum number of bits probed by the local decoder for any $ i $. The average local decodability $ \ravg $ is the expected number of bits (averaged over the source distribution) probed to recover any $ \widehat{X}_i $.
	\item  {\bf Local updater:} A local updater is an algorithm which given $ i\in[n] $ and $ \widetilde{X}_i\in\{0,1\} $, adaptively reads and modifies a small number of bits of $ C^{nR} $ to give $ \widetilde{C}^{nR} $ such that $ \widetilde{C}^{nR}=
	\enc(X_1,\ldots,X_{i-1}\widetilde{X}_i,X_{i+1},\ldots,X_n) $. In particular, the local updater has no prior knowledge of $X^n$ or $C^{nR}$ and must probe $C^{nR}$ to obtain such information. 
	
	The worst-case update efficiency $ \twc $ is defined as the maximum of the sum of the bits read and written in order to update any $ i $. Also, we assume that $ \widetilde{X}_i $ is distributed according to $ p_X $ and is independent of $ X^n $. Likewise, the average update efficiency $ \tavg $ is the sum of the average number of bits probed and written in order to update any~$ X_i $.
\end{itemize}
It was recently shown in~\cite{vatedka2019local} that $$ (\ravg,\tavg)=(O(1),O(1)) $$ is achievable. In that paper the authors also gave a separate compression scheme  achieving\footnote{Throughout the paper, we use logarithms to base two.} $$ (\rwc,\tavg)=(O(\log\log n),O(\log\log n)).$$ In particular, the question of whether $$ (\rwc,\twc)=(O(\log\log n),O(\log\log n)) $$ is achievable was left open. In this paper we answer this question in the affirmative.  We also show that under certain additional assumptions on the local decoder and the local updater this locality is order optimal.

Our achievability proof is based on a novel succinct data structure for  $O(b/\log b)$-sparse sequences of length $b$ in the bitprobe model which for any $0<\delta<1$ takes space $O(\delta b)$ while enabling local decode and update using at most $O(\log b)$ and $O(\frac{1}{\delta}\log b)$ bit reads/writes respectively.
Our restricted converse is based on an analysis of bipartite graphs that represent the encoding and decoding algorithms.

\subsection{Prior work}

Local decoding and update for entropy-achieving compression schemes have been studied mostly in isolation.
The problem of locally decodable source coding of random sequences has received attention very recently following~\cite{makhdoumi2013locally-arxiv,makhdoumi_onlocallydecsource}.
Mazumdar \emph{et al.}~\cite{mazumdar2015local} gave a fixed-length compressor of rate of $ H(p_X)+\varepsilon $ with $\rwc(1)= \Theta(\frac{1}{\varepsilon}\log\frac{1}{\varepsilon}) $. They also provided a converse result for non-dyadic sources: $ \rwc(1)=\Omega(\log(1/\varepsilon)) $ for any compression scheme that achieves rate $ H(p_X)+\varepsilon $. 
Similar results are known for variable length compression~\cite{pananjady2018effect} and universal compression of sources with memory~\cite{tatwawadi18isit_universalRA}. Likewise, there are compressors  that achieve~\cite{montanari2008smooth} rate $R=H(p_X)+\varepsilon$ and update efficiency $\twc=O(1)$.

In the computer science community, the literature has mostly focused on the word-RAM model~\cite{patrascu2008succincter,patrascu2014dynamic,makinen2006dynamic,sadakane2006squeezing,navarro2014optimal,viola2019howtostore}, where each operation (read/write/arithmetic operations) is on words of size $ w=O(\log n) $ bits each, and the complexity is measured in terms of the number of word operations required for local decoding/update. 
However, in this case the number of bitprobes required is $ \Omega(\log n) $. For random messages, it is almost trivial to obtain local decoding/update efficiency of $ O(\log n) $ bitprobes by partitioning the $ n $ message symbols into blocks of size $ O(\log n) $ and compressing each block separately. 

\section{Contributions}
Before we present our main results we make a few observations aimed at justifying our model, and in particular the requirements we impose on the local decoder (see previous section).

Note that for any index $i$ the local decoder must output $ \widehat{X}_i $, the $i$-th estimate of the global decoder. We could potentially relax this constraint, by requiring that the local decoder produces an estimate $ \widehat{X}^{(\mathrm{loc})}_i $, potentially different from $ \widehat{X}_i $,  with small error probability $$ \peloc\defeq \max_i\Pr[\widehat{X}^{(\mathrm{loc})}_i\neq X_i]. $$ 
As we argue next, if we impose $\peloc$ to be small (but non-vanishing) with either no global decoder or with a separate global decoder that achieves vanishing error probability, then from a coding perspective the solution is essentially trivial.
\subsubsection{Only local decodability and non-vanishing error}
If we only require $\peloc $ to be small without constraints on the  global decoder, then we can easily achieve $$ (\rwc,\twc)=\left(O\left(\log\frac{1}{\peloc}\right),O\left(\log\frac{1}{\peloc}\right)\right) .$$ This can be obtained by partitioning the length$-n$ message into blocks of size $b_0= O(\log\frac{1}{\peloc}) $, and compressing each block using an entropy-achieving fixed-length compression scheme---notice that the probability of wrongly decoding any particular block vanishes exponentially with~$ b_0 $. Hence, for any small but constant $\peloc$ we can achieve $$(\rwc,\twc)=(O(1),O(1)).$$
\subsubsection{Separate local and global decoders}
Suppose that in addition to $1)$ we also want $ \peglob=o(1) $ using a separate global decoder to recover $ \widehat{X}^n $. This can be obtained by using a low-density parity check (LDPC) code with $O(1)$ maximum variable and check node degrees. The codeword consists of two parts: $$C^{nR}=(C^{n(R-\delta)}(1),C^{\delta n}(2)),$$ where $C^{n(R-\delta)}(1)$ is obtained as in the previous case by dividing the message into constant size $b_0$ blocks and separately encoding each, while $C^{\delta n}(2)$  is obtained as the syndrome (of the LDPC code) of the (Hamming) error vector between $X^n$ and the decoding of $C^{n(R-\delta)}(1)$. 
	
	The local decoder only probes $C^{n(R-\delta)}(1)$, while the local updater needs to update both $C^{n(R-\delta)}(1)$ and $C^{n\delta}(2)$. Since we are using an LDPC code, $C^{n\delta}(2)$ can be updated  using $O(1)$ bit modifications.  Therefore, $$(\rwc,\twc)=(O(\log\frac{1}{\peloc}),O(\log\frac{1}{\peloc})).$$ 
	
	The global decoder decodes both $C^{n(R-\delta)}(1)$ and $C^{\delta n}(2)$ and can recover $X^n$ with $o(1)$ probability of error.
	


	As we see, $1)$ and $2)$ are essentially trivial cases from a coding perspective. Note also that the solution to $2)$ generally depends on the rate of decay of $\peloc$.
Requiring the local decoder to output $ \widehat{X}_i $ removes this degree of freedom---since $\peglob=o(1)$ implies $\peloc=o(1)$---and, as we argue below, is more interesting from a coding perspective. This setup is perhaps more interesting also practically since we can parallelize  global decoding. If we have a large number of parallel processors, then the runtime of global decoding can be made sublinear in $ n $. 
	
	We will henceforth only consider compression schemes with local decoders that output $ \widehat{X}_i $.

The main result of this article is the following:
\begin{theorem}\label{thm:ach_ologlogn}
	For any $\varepsilon>0$ there exists a compression scheme for Bernoulli($ p $) sources that achieves 
	\[
	(R,\rwc,\twc)= \left( H(p)+\epsilon,O(\log\log n),O(\frac{1}{\epsilon}\log\log n) \right),
	\] 
	and the overall computational complexity of global encoding/decoding is quasilinear in $ n $.
\end{theorem}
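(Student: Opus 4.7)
The plan is to combine block-wise entropy-achieving compression with the novel succinct data structure for sparse sequences mentioned in the introduction. Partition $X^n$ into $n/b$ small blocks of length $b=\Theta(\log\log n / \varepsilon^2)$, so that $\log b=\Theta(\log\log n)$, and reserve a slot of $\lceil b(H(p)+\varepsilon/2)\rceil$ bits per block. Call a block \emph{typical} if its empirical frequency is within $\varepsilon/4$ of $p$; by Hoeffding, a block fails to be typical with probability $\exp(-\Omega(\varepsilon^2 b))$, which is $1/\poly(\log n)$ and can be driven below $1/\log b$ by tuning the constant hidden in $b$. Compress each typical block into its slot via enumerative coding; atypical blocks are stored raw in a separate overflow region.

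To handle the atypical blocks efficiently, group the small blocks into \emph{superblocks} of size $B=\poly(\log n)$. The indicator vector of atypical blocks inside a superblock has length $B/b$ and weight $O((B/b)/\log(B/b))$ with high probability, so the sparsity hypothesis of the new data structure is satisfied. Applied per superblock with parameter $\delta=\Theta(\varepsilon)$, the structure occupies $O(\varepsilon B/b)$ bits, answers a typicality query in $O(\log(B/b))=O(\log\log n)$ probes, and supports an update in $O((1/\varepsilon)\log\log n)$ probes; overflow records are addressed through the same structure. A local decode of $X_i$ then locates its superblock and small block by arithmetic on $i$, queries the sparse structure for typicality, and either enumeratively decodes the slot or reads the overflow bits, all within an $O(\log\log n)$ probe budget. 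A local update may toggle the typicality bit and migrate $b=O(\log\log n)$ bits between slot and overflow, each operation costing $O((1/\varepsilon)\log\log n)$ writes. The rate decomposes as $H(p)+\varepsilon/2$ from the slots, $O(\varepsilon)$ from the sparse structures, and $o(1)$ from overflow, summing to $H(p)+\varepsilon$; global encode/decode processes each superblock independently in linear time, giving quasilinear total complexity.

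The hard part is the sparse data structure itself, and in particular orchestrating the typicality indicator, slot contents, and overflow region so that a single source-bit update causes only $O((1/\varepsilon)\log\log n)$ bit writes even when it flips a block between typical and atypical. I expect this to require a tree- or expander-based allocation of depth $O(\log b)$ whose nodes hold $O(1/\varepsilon)$ pointers each, so that inserting or deleting an atypical record perturbs only $O((1/\varepsilon)\log b)$ memory bits rather than triggering a cascade across the superblock. A secondary concern is controlling the rare event that a superblock receives more atypical blocks than the sparsity bound permits; this is handled by a Chernoff bound together with a modest shared overflow pool reserved for the failure event, whose amortized rate cost is negligible.
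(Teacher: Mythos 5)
Your high-level architecture---small blocks compressed into fixed-length near-entropy slots, with atypical blocks handled through the sparse-sequence structure at a $\poly(\log n)$-sized superblock level---is essentially the paper's, and your rate and concentration accounting is fine. The genuine gap is exactly the point you flag as ``the hard part.'' You invoke Lemma~\ref{lemma:succ_ds_ologn} only to store the one-bit-per-block typicality indicator, and place the raw atypical blocks in a separate overflow region whose records are ``addressed through the same structure.'' But the lemma, as stated, stores a sparse bit-vector; it does not hand you a dynamic dictionary that allocates, addresses, deallocates and compacts a changing set of $b$-bit payloads. When an update flips a block between typical and atypical you must insert or delete a $b$-bit overflow record while keeping the overflow space bounded, which forces relocation of records and pointer maintenance---precisely the cascade you worry about---and your proposed remedy (``I expect this to require a tree- or expander-based allocation\ldots'') is a conjecture, not a construction. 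As written, the claimed worst-case $\twc=O(\frac{1}{\varepsilon}\log\log n)$ is therefore not established; the allocation machinery you are trying to invent is in fact the entire content of the data structure behind Lemma~\ref{lemma:succ_ds_ologn} (status bits, memory table of chunks with reverse pointers, per-block pointers, counter).

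The fix is available inside the lemma you already cite: do not separate ``indicator'' from ``payload.'' Store the concatenated error vector itself---the length-$B$ binary string equal to the raw subblock on atypical subblocks and $0^{b}$ elsewhere---in the Lemma~\ref{lemma:succ_ds_ologn} structure with $\delta=\Theta(\varepsilon)$; its sparsity satisfies the hypothesis w.h.p.\ by your own Chernoff argument. Local decoding of any source bit then costs $O(\log\log n)$ probes (read the slot, or read the relevant bits of the error structure), and a typicality flip rewrites one contiguous subblock of the error vector, which lies in $O(1)$ of the structure's chunks provided the subblock length is at most the chunk length $O(\frac{1}{\varepsilon}\log\log n)$, giving $\twc=O(\frac{1}{\varepsilon}\log\log n)$ with no separate overflow bookkeeping; this is exactly the paper's proof. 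Two minor points: for $b=\Theta(\log\log n/\varepsilon^2)$ you have $\log b=\Theta(\log\log\log n)$, not $\Theta(\log\log n)$ (harmless, since your probe counts come from $\log(B/b)$ and from reading whole slots, but note this $b$ is also a factor $1/\varepsilon$ larger than the chunk length, which would inflate the flip cost); and the ``shared overflow pool'' for superblocks exceeding the sparsity bound is unnecessary---that event can be driven to $o(1/n)$ per superblock and simply declared a global decoding error, as the paper does.
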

The above theorem is formally proved in Section~\ref{sec:prf_ologlogn_ach}. The proof of the above theorem is based on a novel dynamic succinct data structure for sparse sequences that achieves $ O(\log n) $ locality in the bitprobe model. 
\begin{lemma}\label{lemma:succ_ds_ologn}
	Fix any $ \delta>0 $. For every $ \beta = o(b/\log b) $, there exists a dynamic succinct data structure for $ b $-length binary vectors of sparsity at most $ \beta $ with the following properties. Any such vector occupies at most $ \delta b(1+o(1)) $ bits, has worst-case local decoding $ O(\log b) $ and worst-case update efficiency at most $ O(\frac{1}{\delta}\log b) $. 
\end{lemma}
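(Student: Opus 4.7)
The plan is a two-level succinct decomposition adapted to the sparsity regime $\beta = o(b/\log b)$. Partition the $b$-bit input $V$ into $N = \lceil b/s \rceil$ blocks of size $s = \lceil 1/\delta \rceil$. Maintain (i) a ``top'' bit vector $T \in \{0,1\}^N$ with $T[j] = 1$ iff block $j$ is non-zero, using $|T| = b/s = \delta b\,(1+o(1))$ bits; (ii) a payload $D$ storing the $s$-bit contents of the $\leq \beta$ non-empty blocks, ordered by their rank among non-empty blocks, using $|D| \leq \beta s = O(\beta/\delta) = o(b)$ bits; and (iii) a succinct dynamic rank structure $R$ on $T$ of size $o(|T|)$ supporting both $\mathrm{rank}_1$ queries and single-bit updates in $O(\log b)$ bit-probes (built via a standard superblock/subblock/lookup-table scheme, made dynamic by periodic local rebuilding). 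The total space is $|T|+|D|+|R| = \delta b\,(1+o(1))$.

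For local decoding of $V[i]$, compute $j = \lfloor i/s \rfloor$ and $k = i \bmod s$, read $T[j]$ in one probe, return $0$ if $T[j]=0$, and otherwise compute $r = \mathrm{rank}_1(T, j-1)$ through $R$ in $O(\log b)$ probes and return the $(rs+k)$-th bit of $D$ in one further probe. The worst-case decoding cost is therefore $O(\log b)$ bit-probes.

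For local update of $V[i]$, read the relevant bits of block $j$ (at most $s = O(1/\delta)$ probes) to determine whether its non-emptiness changes. If not, overwrite the one target bit in $D$. If it does, toggle $T[j]$, propagate the change through $R$ in $O(\log b)$ probes, and insert or remove the $s$-bit slot in $D$ at rank position $r$. To avoid a cascading $\Omega(\beta s)$-bit shift in $D$, I organize $D$ as a balanced dynamic array---concretely, a B-tree with $\Theta(1/\delta)$ slots per leaf and subtree-size counters at internal nodes---so that an insertion/deletion touches $O(\log \beta)$ internal nodes plus one leaf of $O(1/\delta^2)$ bits, summing to $O((1/\delta)\log b)$ probes per update.

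The principal obstacle is the dynamic aspect: the rank structure $R$ and the payload $D$ must simultaneously achieve $o(|T|)$ auxiliary space, worst-case $O(\log b)$ rank query and rank-update, worst-case $O(\log b)$ rank-indexed lookup (on the decoding path), and worst-case $O((1/\delta)\log b)$ insertion/deletion in $D$. Each of these ingredients is available in the succinct-data-structures literature, but composing them with \emph{worst-case} (rather than amortized) bounds requires a careful local-rebuilding strategy that distributes rebuild cost over updates. The sparsity hypothesis $\beta = o(b/\log b)$ enters in exactly one place: it forces $|D|=o(b)$ and the rank overhead $o(|T|)$, ensuring that the top vector $T$ dominates the space budget at $\delta b\,(1+o(1))$ for every fixed $\delta > 0$.
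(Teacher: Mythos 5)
Your high-level architecture (bit-per-block occupancy vector, payload holding only the nonzero blocks, space dominated by the top-level vector) matches the spirit of the paper, but the two primitives your construction leans on are exactly the hard part, and you assert rather than construct them. First, you need a \emph{dynamic} rank structure $R$ over $T$ with $o(|T|)$ redundancy supporting both $\mathrm{rank}_1$ queries and single-bit updates in worst-case $O(\log b)$ \emph{bit probes}. The ``standard superblock/subblock/lookup-table scheme'' is static: a single bit flip in $T$ changes the precomputed prefix counts of essentially all later superblocks, so updating it is not a local operation, and the usual dynamizations (Fenwick/balanced trees of counters) cost $O(\log N)$ counters of $O(\log \beta)$ bits each per operation, i.e.\ $\Theta(\log^2 b)$ bit probes, not $O(\log b)$; ``periodic local rebuilding'' does not obviously repair this, and dynamic prefix-sum lower bounds suggest it cannot be waved away. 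Second, your decoder treats $D$ as a flat array (``return the $(rs+k)$-th bit of $D$ in one further probe''), while your updater reorganizes $D$ as a B-tree precisely so that insertions need not shift it. These are incompatible: once $D$ is a pointer-based tree, locating the rank-$r$ slot requires traversing $O(\log\beta)$ nodes and reading $O(\log\beta)$-bit pointers/counters at each, which again blows the $O(\log b)$ decoding budget. Your closing paragraph concedes that the worst-case composition ``requires a careful local-rebuilding strategy,'' but that strategy is the content of the lemma, not a detail.

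The paper's proof avoids both obstacles by choosing blocks of size $b_1=\Theta(\frac{1}{\delta}\log b)$ rather than constant size $1/\delta$. With only $b/b_1=\delta b/\log b$ blocks, one can afford an \emph{explicit forward pointer} of $\log\beta$ bits per block into a table of $\beta$ fixed-size chunks, eliminating rank computation entirely: decoding reads the status bit, the pointer, and one bit of the addressed chunk ($2+\log\beta=O(\log b)$ probes). Each chunk also carries a reverse pointer to its block, and deletions keep the table compact by overwriting the vacated chunk with the last occupied chunk (whose index is given by a $\log\beta$-bit counter) and fixing the one affected forward pointer; an update therefore touches at most a constant number of pointers, the counter, and one chunk of $b_1+\log(b/b_1)$ bits, giving worst-case $O(\frac{1}{\delta}\log b)$ with no amortization, no rank structure, and no dynamic array. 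If you want to salvage your route, you would have to either prove the dynamic succinct rank bound you invoke (in the bit-probe model, with worst-case guarantees) or redesign the addressing as the paper does, by enlarging the blocks until explicit pointers fit in the $\delta b$ space budget.
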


To prove a lower bound, we make three assumptions 
\begin{itemize}
	\item[(A1)] \emph{Global encoding and decoding using local algorithms:} We assume that  $ C^{nR} $ is obtained by running the local  updater on each message symbol, and $\widehat{X}^n$ by running the local decoder for each bit. In other words, there is no separate global encoder or decoder. 
	\item[(A2)] \emph{Function assumption:} 
	The local update function for the $ t $'th update is a deterministic function of $ X^n $, and is independent of the sequence of the previous $ t-1 $ updates. 
	Likewise, the local decoder does not depend on the sequence of updates that have occurred previously. 
	\item[(A3)] \emph{Bounded average-to-worst case influence:} For nonadaptive schemes, we can construct the corresponding local encoding graph $\cG_e$ and the local decoding graph $\cG_d$ as follows---see Fig.~\ref{fig:encoding_decoding_graph}. Two vertices $(i,j)$ in $\cG_e$ are adjacent if $C_j$ is a function of $X_i$. Likewise, $(j,i)$ are adjacent in $\cG_d$ if $\widehat{X}_i$ is a function of $C_j$. We assume that the ratio of the average to the worst case degrees of the right vertices of $\cG_e$ is bounded from below by a constant independent of $n$, and similarly for the left vertices of $\cG_d$. 
\end{itemize}
\begin{figure}
	\begin{center}
		\includegraphics[width=7cm]{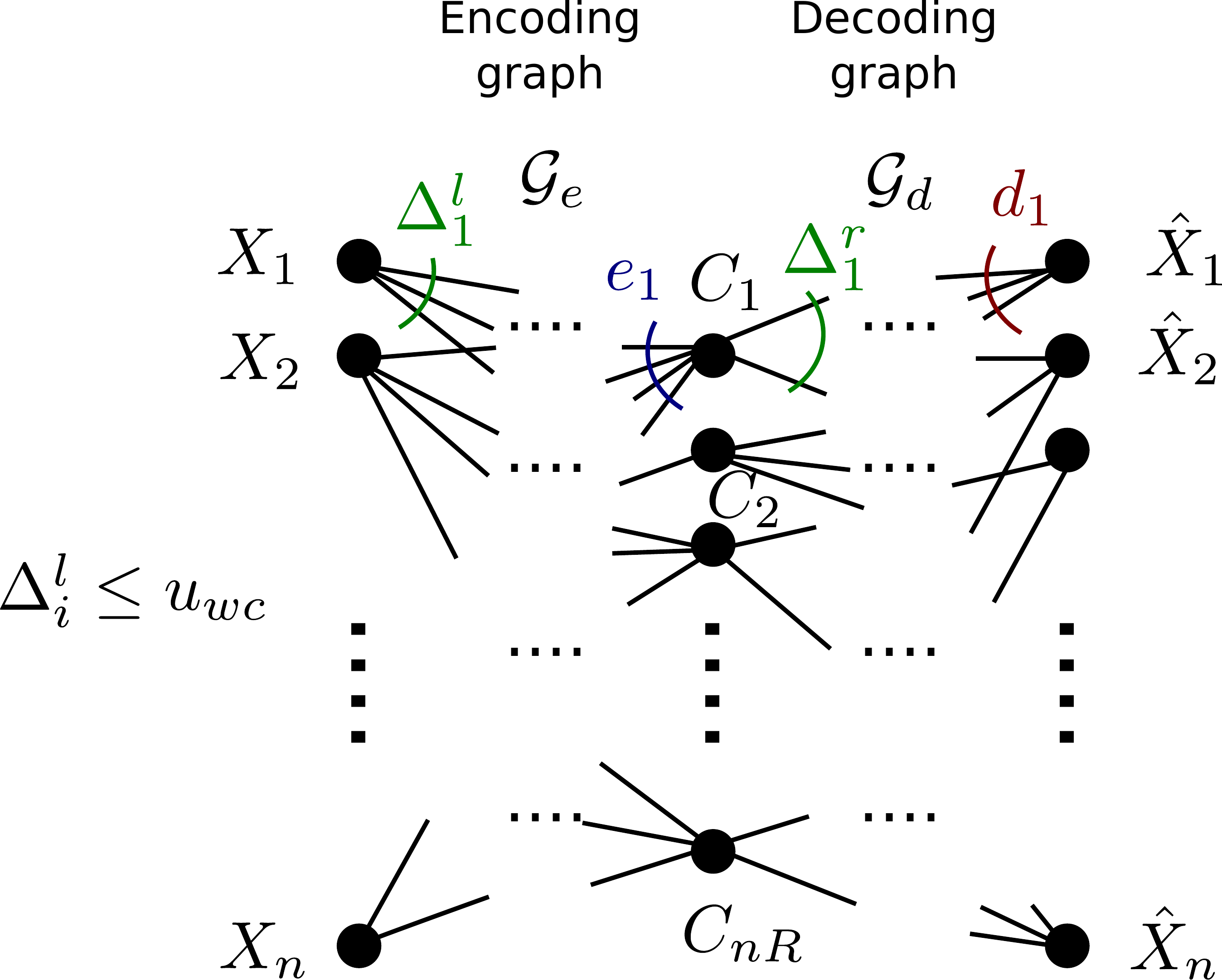}
		\caption{
		Encoding and decoding graphs under assumptions (A1) and (A2). The degree of the $ i $th left vertex in $ \cG_e $ is $ \Delta^l_i $, which is upper bounded by $u_i$, the update efficiency for message symbol $i$. The $ j $th right vertex is the local encodability of the $ i $th symbol $ e_j $, defined in Sec.~\ref{sec:lbound_simult_loc}. The degree of a right vertex in $ \cG_d $ is equal to the local decoding of the $ i $th symbol $ d_i $.}
		\label{fig:encoding_decoding_graph}
	\end{center}
	\vspace{-0.5cm}
\end{figure}
\noindent {\it{Remark:}} Under (A1) and (A2), any adaptive scheme (whether for update or decoding) achieving worst-case locality $ l $ can be converted to a non-adaptive scheme with locality $\leq 2^l $. This is because an adaptive scheme with locality $ l $ may depend on up to $ 2^l $ different bits. Therefore a lower bound on the locality of a non-adaptive scheme translates into a lower bound on the logarithm of the locality of an adaptive scheme.

\begin{theorem}\label{thm:dwc_uwc_lbound_adaptive}
	For any adaptive scheme that satisfy (A$1$)--(A$3$) with $R<1$, we have
	\[
	\rwc + \twc = \Omega(\log\log n).
	\]
\end{theorem}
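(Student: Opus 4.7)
The plan is to reduce to the non-adaptive setting via the remark that precedes the theorem, and then prove a polynomial-in-$\log n$ converse in the non-adaptive model. Any adaptive scheme satisfying (A1)--(A3) with worst-case localities $(\rwc,\twc)$ yields, by the remark, a non-adaptive scheme satisfying the same assumptions with localities at most $(2^{\rwc},2^{\twc})$. Therefore it suffices to establish a product bound $\rwc^{\mathrm{na}}\cdot\twc^{\mathrm{na}}=\Omega(\log n)$ in the non-adaptive model: this yields $2^{\rwc+\twc}\geq \rwc^{\mathrm{na}}\twc^{\mathrm{na}}=\Omega(\log n)$, giving $\rwc+\twc=\Omega(\log\log n)$ after taking logarithms.

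The first step is to translate (A3) into worst-case degree bounds on the graphs $\cG_e,\cG_d$. Counting edges in $\cG_e$ gives total at most $n\twc^{\mathrm{na}}$, hence average right-degree $\leq \twc^{\mathrm{na}}/R$, which (A3) promotes to a worst-case bound $e_{\max}=O(\twc^{\mathrm{na}}/R)$. Symmetrically, $f_{\max}=O(\rwc^{\mathrm{na}}/R)$ in $\cG_d$. Composing the two graphs, every $\widehat{X}_i$ is a deterministic function of the source bits in the local view $T_i:=\bigcup_{j\in N_{\cG_d}(i)}N_{\cG_e}(j)$, and $|T_i|\leq \rwc^{\mathrm{na}}\cdot e_{\max}=O(\rwc^{\mathrm{na}}\twc^{\mathrm{na}})$ for constant $R$. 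A symmetric count shows that each $X_i$ can influence the decoder outputs at only $O(\rwc^{\mathrm{na}}\twc^{\mathrm{na}})$ indices.

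The second step uses $R<1$ and $\peglob=o(1)$ to force $|T_i|=\Omega(\log n)$. I would (i) greedily pick a family $I\subseteq[n]$ of indices whose local views $\{T_i\}_{i\in I}$ are pairwise disjoint; the bounded degrees imply that each selected index blocks at most $O((\rwc^{\mathrm{na}}\twc^{\mathrm{na}})^2)$ others, so $|I|=\Omega(n/(\rwc^{\mathrm{na}}\twc^{\mathrm{na}})^2)$. (ii) On such an $I$ the error events $\{\widehat{X}_i\neq X_i\}_{i\in I}$ are mutually independent, being functions of disjoint sets of i.i.d.\ source bits, so $\peglob \geq 1-\prod_{i\in I}(1-p_i)$. (iii) Establish a local error lower bound $p_i\geq c(R,p)\cdot 2^{-O(|T_i|)}$: the restriction of the encoder to the bits read by decoder $i$ is a map $f_B:\{0,1\}^{|T_i|}\to\{0,1\}^{\rwc^{\mathrm{na}}}$ that must, by $R<1$ and an averaging argument over the outer source bits, fail to be injective on a non-negligible fraction of inputs, pinning a collision that flips $X_i$ with probability at least exponential in $-|T_i|$. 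Combining (i)--(iii), $\peglob=o(1)$ forces $|T_i|=\Omega(\log n)$, hence $\rwc^{\mathrm{na}}\twc^{\mathrm{na}}=\Omega(\log n)$, completing the reduction.

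The main obstacle is step (iii). The standard source-coding converse bounds $R\geq H(p)$ globally but does not localize to a decoder's view, and the gap between $H(p)$ and the constraint $R<1$ is in general a large constant, so Mazumdar-type arguments that depend on $R-H(p)$ being small do not apply. Pushing the global rate deficit $(1-R)\cdot n$ down into individual local views $T_i$—so that collisions of $f_B$ occur on a positive-probability set of i.i.d.\ inputs—requires a careful counting or entropy argument on random restrictions of $X_{[n]\setminus T_i}$, together with using (A3) to avoid pathological concentrations of degree. A secondary (but easier) difficulty is the greedy construction of $I$: one must verify that the two-step blocking in the composed graph only removes a $\poly(\rwc^{\mathrm{na}}\twc^{\mathrm{na}})$ fraction of indices, so that $|I|$ remains large enough for the independent-errors union bound to bite.
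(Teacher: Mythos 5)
Your overall architecture is the same as the paper's: reduce the adaptive case to the nonadaptive case via the $2^l$ remark, then prove a product bound $\rwc\twc=\Omega(\log n)$ for nonadaptive schemes by (a) using (A3) to turn average degrees of $\cG_e,\cG_d$ into worst-case degree bounds, (b) bounding the effective neighbourhood $\cN_{\mathrm{eff}}(i)$ by $O(\rwc\twc)$, (c) greedily extracting $\Omega(n/\poly(\rwc\twc))$ indices with disjoint neighbourhoods so that their error events are independent, and (d) lower-bounding each such bit-error probability exponentially in the neighbourhood size. Steps (a)--(c) are fine and match Lemmas~\ref{lemma:eta_i_bound}--Theorem~\ref{thm:dwc_uwc_lbound} in spirit (your blocking count in the greedy step is if anything more explicit than the paper's).

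The genuine gap is your step (iii), which you yourself flag as the ``main obstacle'' and do not prove --- and the route you sketch for it (forcing the restriction of the encoder to a single view $T_i$ to be non-injective by pushing the global rate deficit into that view) cannot work as stated: for any fixed $i$ the scheme may simply store $X_i$ verbatim inside its view, so no per-index collision argument is available, and no entropy/counting argument over random restrictions is needed either. The correct and much simpler resolution, which is the paper's Lemma~\ref{lemma:lbound_biterror} together with a pigeonhole step, is global: since $R<1$ and $p\in(0,1)$, at least $n(1-R)$ indices must satisfy $P_e^{(i)}>0$, for otherwise more than $nR$ i.i.d.\ source bits would be recovered with zero error from only $nR$ compressed bits, an impossible injection of $\{0,1\}^{>nR}$ into $\{0,1\}^{nR}$. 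For any such index the event $\{\widehat{X}_i\neq X_i\}$ is a nonempty event determined by the i.i.d.\ bits indexed by $\cN_{\mathrm{eff}}(i)\cup\{i\}$, hence $P_e^{(i)}\geq \min(p,1-p)^{|\cN_{\mathrm{eff}}(i)|+1}$, which is exactly the bound $2^{-O(|T_i|)}$ you need; you then run your greedy selection inside this set of positive-error indices and conclude as you outline. So the proposal is structurally on target, but the key quantitative ingredient is missing, and the specific mechanism proposed to obtain it is aimed at a statement that is false per-index.
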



As mentioned earlier, it is possible to achieve $ (\rwc,\twc)=(O(1),O(1)) $ without assumptions (A1)--(A3). We conjecture that Theorem~\ref{thm:dwc_uwc_lbound_adaptive} holds even without assumption (A3), but were unsuccessful in proving this. We also conjecture that it holds even without (A2). As we will see later, the data structure that leads to Lemma~\ref{lemma:succ_ds_ologn} does not satisfy (A2).

\section{Achievability}

The high-level structure of our scheme  is inspired by the locally decodable compressor in~\cite{mazumdar2015local}. The idea in~\cite{mazumdar2015local} is to partition the set of message symbols into constant-sized blocks and use a fixed-length compressor for each block. The residual error vector is then encoded using the succinct data structure in~\cite{buhrman2002bitvectors} that occupies negligible space but allows local decoding of a single bit using $O(1)$ bitprobes. However, this data structure is static, in the sense that it does not allow efficient updates and hence does not get us small $\twc$.

A well-known dynamic data structure in the word-RAM model is the van Emde Boas tree~\cite{van1975preserving} which takes space $O(b)$ but allows local retrieval, insert and delete in $O(\log\log b)$ time (equivalently $O(\log b\,\log\log b)$ bitprobes). If we use the van Emde Boas tree for encoding the residual error vector, then we can achieve rate close to $H(p_X)$ but a higher locality of $O(\log\log n\;\poly(\log\log\log n))$.

Our main contribution is a novel dynamic succinct data structure (used to encode the residual error) which occupies roughly the same space as~\cite{buhrman2002bitvectors} but allows local decode \emph{and} update using only $O(\log b)$ bitprobes.

\subsection{Proof of Theorem~\ref{thm:ach_ologlogn}}\label{sec:prf_ologlogn_ach}

We partition the $ n $-length message sequence $ x^n $ into blocks $ x^{b_1}(1),\ldots,x^{b_1}(n/b_1) $ of size $ b_1 =O(\log n) $ each. Each block $ i $ is further partitioned into subblocks ($ x^{b_0}(i,1),\ldots, x^{b_0}(i,b_1/b_0) $) of $ b_0 $ symbols each.
Each subblock is compressed independently using a fixed-length lossy compression scheme of rate $ H(p)+\epsilon $ and average per-letter distortion $ \epsilon $. Let $ c^{b_0(H(p)+\epsilon)}(i,j) $ denote this subcodeword for the $ (i,j) $th subblock. In addition, the error vectors 
(denoted $ e^{b_0}(i,j) $ and equal to $ x^{b_0}(i,j) $ if $ x^{b_0}(i,j) $ is atypical and $ 0^{b_0} $ otherwise) 
are concatenated and for each $ i $, $e^{b_1}(i)\defeq (e^{b_0}(i,1),\ldots, e^{b_0}(i,b_1/b_0)) $ is compressed using the scheme in Lemma~\ref{lemma:succ_ds_ologn} to give $ \bar{c}^{\epsilon{b_1}}(i) $ with $ \delta=\epsilon $. 
The overall codeword is the concatenation of $(c^{b_0(H(p)+\epsilon)}(i,j): 1\leq i\leq n/b_1,1\leq j\leq b_1/b_0)$ and $(\bar{c}^{\epsilon{b_1}}(i):1\leq i\leq n/b_1)$.

As long as the sparsity of $ e^{b_1}(i) $ is less than  $ \alpha b_1/\log b_1 $ for a suitably chosen $\alpha>0$, we can recover the $i$th message block (or any symbol within it) without error.

We choose $ b_1 = \alpha_1(\log n\log\log n) $ and $ b_0=\alpha_2(\log\log n) $.

Using Azuma's inequality (and carefully choosing $\alpha_1,\alpha_2$), the probability that the distortion in each block is greater than $ \alpha\log n/\log\log n $ falls as $ o(1/n) $.

The worst-case local decoding is at most $ O(\log b_1)=O(\log\log n) $, while the worst-case update efficiency is $ O(\frac{1}{\epsilon}\log\log n) $. The compression rate is $ H(p)+2\epsilon $, and the overall probability of error (using the union bound over blocks) is $ o(1) $. This completes the proof. \qed

All that remains is to prove Lemma~\ref{lemma:succ_ds_ologn}.


\subsection{A succinct data structure achieving $ O(\log b) $ locality for sparse sequences of length $ b $}\label{sec:scheme_succdata_ologb}

\begin{figure*}
    \centering
    \includegraphics[width=13cm]{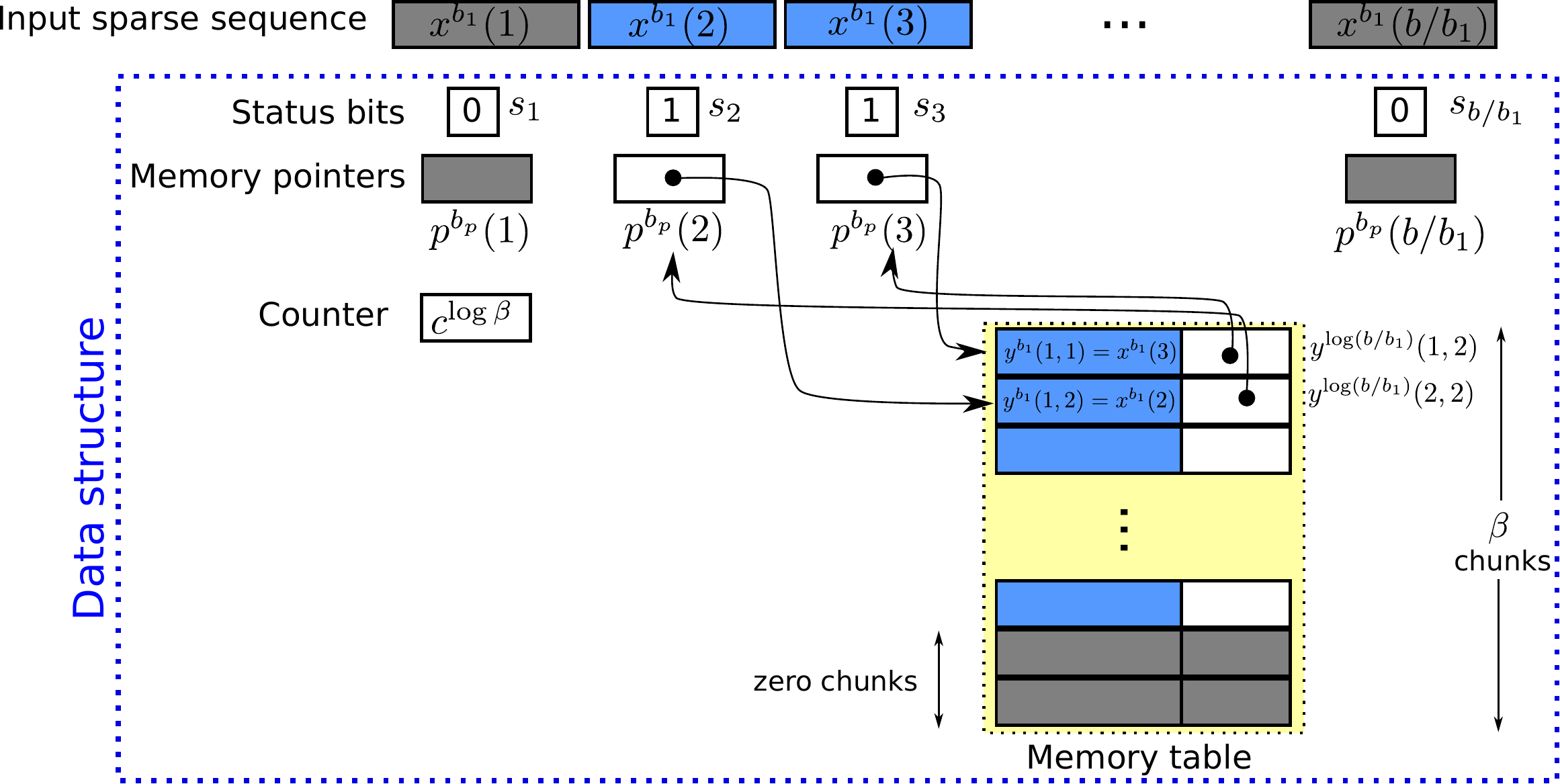}
    \caption{Illustrating the data structure in Sec.~\ref{sec:scheme_succdata_ologb}. For each message block, we store a status bit to denote whether the block is nonzero, and a memory pointer indicating the corresponding location in the memory table. Additionally, we store a counter with the number of nonzero blocks and a memory table. Each chunk in the memory table contains some data and a reverse pointer to the corresponding block. In the figure, nonzero blocks are colored blue while zero blocks/chunks are colored grey. To decode a nonzero block (say the $3$rd block in this example), the algorithm reads the memory pointer, and outputs the corresponding chunk that this points to (the blue chunk in the first row of the memory table). It is to be noted that after multiple updates, the order in which blocks are stored in the memory table could be arbitrary. For \emph{e.g.}, the first chunk in the memory table need not store the first nonzero block as illustrated in the figure above.}
    \label{fig:new_datastructure}
\end{figure*}

The high-level idea in our data structure is to split the $b$ symbols into blocks of $O(\log b)$ symbols each, and maintain a dynamic memory table where we only store the blocks with nonzero Hamming weight. Addressing is resolved by storing for each block a pointer which indicates the location in the memory table where the block is encoded.

Suppose that we split the $ b $-length sequence $ x^b $ into blocks of $ b_1 $ consecutive symbols each: $ x^{b_1}(1),\ldots, x^{b_1}(b/b_1) $.

The data structure has the following components:
\begin{itemize}
	\item \emph{Status bits:} These indicate whether each block is zero or not. We store $ b/b_1 $ many bits $ s_1,\ldots,s_{b/b_1} $, one for each block.  The status bit $ s_i $ is set to $ 1 $ if the Hamming weight of $ x^{b_1}(i) $ is greater than zero, and set to zero otherwise.
	\item \emph{Memory table:} The table stores information about the nonzero blocks. The table is organised into $ \beta $ many chunks\footnote{The chunks refer to the virtual partition of the memory table into units of $b_m$ bits each. This is to distinguish this from the ``blocks'' which form partitions of the $b$-length input sequence.} $ y^{b_m}(1),\ldots, y^{b_m}(\beta) $ of $b_m= b_1+\log(b/b_1) $ bits each.  The $ i $th chunk $ y^{b_m}(i) $ is further split into two parts: $ y^{b_1}(i,1) $ having $ b_1 $ bits, and $ y^{\log(b/b_1)}(i,2) $ having $ \log(b/b_1) $ bits. Here, $ y^{b_1}(i,1) $ is a vector which stores a nonzero block $ x^{b_1}(j) $ for some (suitably defined later) $ j $, while $ y^{\log(b/b_1)}(i,2) $ is a reverse pointer which encodes $ j $ in $ \log(b/b_1) $ bits. 
	\item \emph{Memory pointers:} These indicate  where each block is stored in the memory table. There are $ b/b_1 $ pointers $ p^{b_{p}}(1),\ldots, p^{b_{p}}(b/b_1) $ of $ b_{p}=\log \beta $ bits each, one for each block. 
	\item \emph{Counter for number of nonzero blocks:} $c^{\log \beta}$ is a vector of length $ \log \beta $ bits which stores the number of nonzero blocks in $ x^b $.
\end{itemize}
The overall codeword is a bit sequence obtained by the concatenation of the status bits, memory table, memory pointers and the counter. The total space required  is
\begin{align}
k_{ds} &= \frac{b}{b_1}+\beta(b_1+\log(b/b_1)) + \frac{b}{b_1}\log\beta + \log\beta \label{eq:space_newds}
\end{align}
The data structure is illustrated in Fig.~\ref{fig:new_datastructure}.

\subsubsection{Initial encoding}
Let $ k $ denote the number of nonzero blocks in $ x^b $.
\begin{itemize}
	\item Status bits: If $ x^{b_1}(i) $ has nonzero Hamming weight, then $ s_i=1 $. Otherwise, it is set to zero.
	\item Memory pointers: 
	If $x^{b_1}(i)=0^{b_1}$, then  $p^{b_p}(i)=0^{b_p}$.
	If not, and $x^{b_1}(i)$ is the $ j $th nonzero block among $ x^{b_1}(1),\ldots,x^{b_1}(i) $, then $ p^{b_p}(i)=j $ (or more precisely, the binary representation of $ j $).
	\item Counter for number of nonzero blocks: $ c^{\log \beta} $ is set to the number of nonzero blocks.
	\item Memory table: For every $ i $, if $ s_i=1 $ and $ p^{b_p}(i)=j $, then $ y^{b_1}(j,1)=x^{b_1}(i) $ and $ y^{\log(b/b_1)}(i,2) $ is equal to the binary representation of $ i $.
\end{itemize}

\subsubsection{Local decoding}
Suppose that we want to recover $ x_{i} $ which happens to be the $ i_1 $th bit in the $ i_2 $th block (\emph{i.e.,} $i=(i_2-1)b_1+i_1$).
\begin{itemize}
	\item If $ s_{i_2}=0 $, then output $ 0 $. This is because $ s_{i_2}=0 $ implies that the entire block is zero.
	\item If not, then read $ p^{b_p}(i_2) $. If $ p^{b_p}(i_2)=j $, then output the $ i_1 $th bit in $ y^{b_1}(j,1) $.
\end{itemize}
The maximum number of bits probed is 
\[
\rwc = 1+b_p+1 = 2+\log\beta.
\]
\subsubsection{Local update}
Suppose that we want to update $ x_{i} $ (which happens to be the $ i_1 $th bit in the $ i_2 $th block) with $ \widetilde{x}_i $.
The update algorithm works as follows:
\begin{itemize}
	\item Suppose that $ x_i=0 $ and $ \widetilde{x}_i =1$. The updater first reads $ s_{i_2} $. 
	\begin{itemize}
		\item If $ s_{i_2}=1 $, then it reads $ p^{b_p}(i_2) $. Suppose that $ p^{b_p}(i_2) =j$. Then it writes  $ \widetilde{x}_i $ into the $ i_1 $th location of $ y^{b_1}(j,1) $.
		\item If $ s_{i_2}=0 $, then it means that the block was originally a zero block. The updater sets $ s_{i_2} $ to $ 1 $, and increments the counter for the number of nonzero blocks $ c^{\log\beta} $ by~$ 1 $. Suppose that after incrementing, $ c^{\log\beta}=j $. Then the updater sets $ p^{b_p}(i_2)=j $, writes $ \widetilde{x}^{b_1}(i_2) $ into $ y^{b_1}(j,1) $, and sets $ y^{\log(b/b_1)}(j,1) $ to $ i_2 $.
	\end{itemize} 
	\item Suppose that $ x_i=1 $ and $ \widetilde{x}_i =0$. The updater first reads $ s_{i_2} $. Clearly, this should be equal to $ 1 $. The updater reads $ p^{b_p}(i_2) $ (suppose that it is equal to $ j $), and then $ y^{b_1}(j,1) $ to compute $ x^{b_1}(i_2) $. 
	\begin{itemize}
		\item If $ x^{b_1}(i_2) $ has Hamming weight greater than $ 1 $, then it flips the $ i_1 $th bit of $ y^{b_1}(j,1) $.
		\item If not, then it implies that $ \widetilde{x}^{b_1}(i_2)=0^{b_1} $. The updater next sets $ s_{i_2} $ to $ 0 $. It then decrements $ c^{\log \beta} $. It next overwrites $ y^{b_m}(j) $ with the contents of $ y^{b_m}(c^{\log \beta}) $, and sets $ p^{b_p}(y^{\log(b/b_1)}(c^{\log \beta},2)) $ to $ j $. This is to consistently ensure that the first $ c^{\log \beta} $ chunks of the memory table always contains all the information about nonzero blocks.
	\end{itemize}
	The maximum number of bits that need to be read and written in order to update a single message bit is 
	\[
	\twc = 2+ b_p+2\log \beta + 2b_m+b_p = 2+b_1+4\log \beta +\log\frac{b}{b_1}.
	\]
\end{itemize}

\subsubsection{Proof of Lemma~\ref{lemma:succ_ds_ologn}}
Let us now prove the statement. We use the above scheme with $ b_1=O\left(\frac{1}{\delta}\log b\right) $.
From~\eqref{eq:space_newds}, the total space used is 
\[
k_{ds}\leq \delta b(1+o(1)).
\]
The worst-case local decoding is equal to $O(\log b)$
and the worst-case update efficiency is equal to $O\left( \frac{1}{\delta}\log b \right)$.
This completes the proof. \qed

\begin{remark}
The succinct data structure here satisfies (A$1$) but not (A$2$). Clearly, the order of the chunks in the memory table depends on the sequence of updates performed previously. For example, the first chunk could initially contain data of the first subblock. After a number of updates (e.g., involving setting all bits of the first block to zero, inserting bits in other blocks, and then repopulating the first block with ones), the first block could be stored in chunk $ l>1 $.
\end{remark}


\section{Lower bounds on simultaneous locality}\label{sec:lbound_simult_loc}

To obtain lower bounds, we introduce an additional parameter that we might be interested in minimizing: 
the worst-case \emph{local encodability}, $ \ewc $,  defined to be the maximum number of input symbols that any single codeword bit can depend on. Note that this is different from the update efficiency. This was studied in~\cite{mazumdar2017semisupervised}, where the authors related this quantity to a problem of semisupervised learning. It has been established in the literature that separately, each of $ \rwc,
\twc,\ewc $ can be made $ O(1) $ for near-entropy compression. However, it is not known if $$ (\rwc,\twc,\ewc)=(\Theta(1),\Theta(1),\Theta(1)) $$ can be simultaneously achieved.

In this section, we assume (A1)--(A3) and that the scheme is nonadaptive.



\begin{figure}
	\begin{center}
		\includegraphics[width=4.5cm]{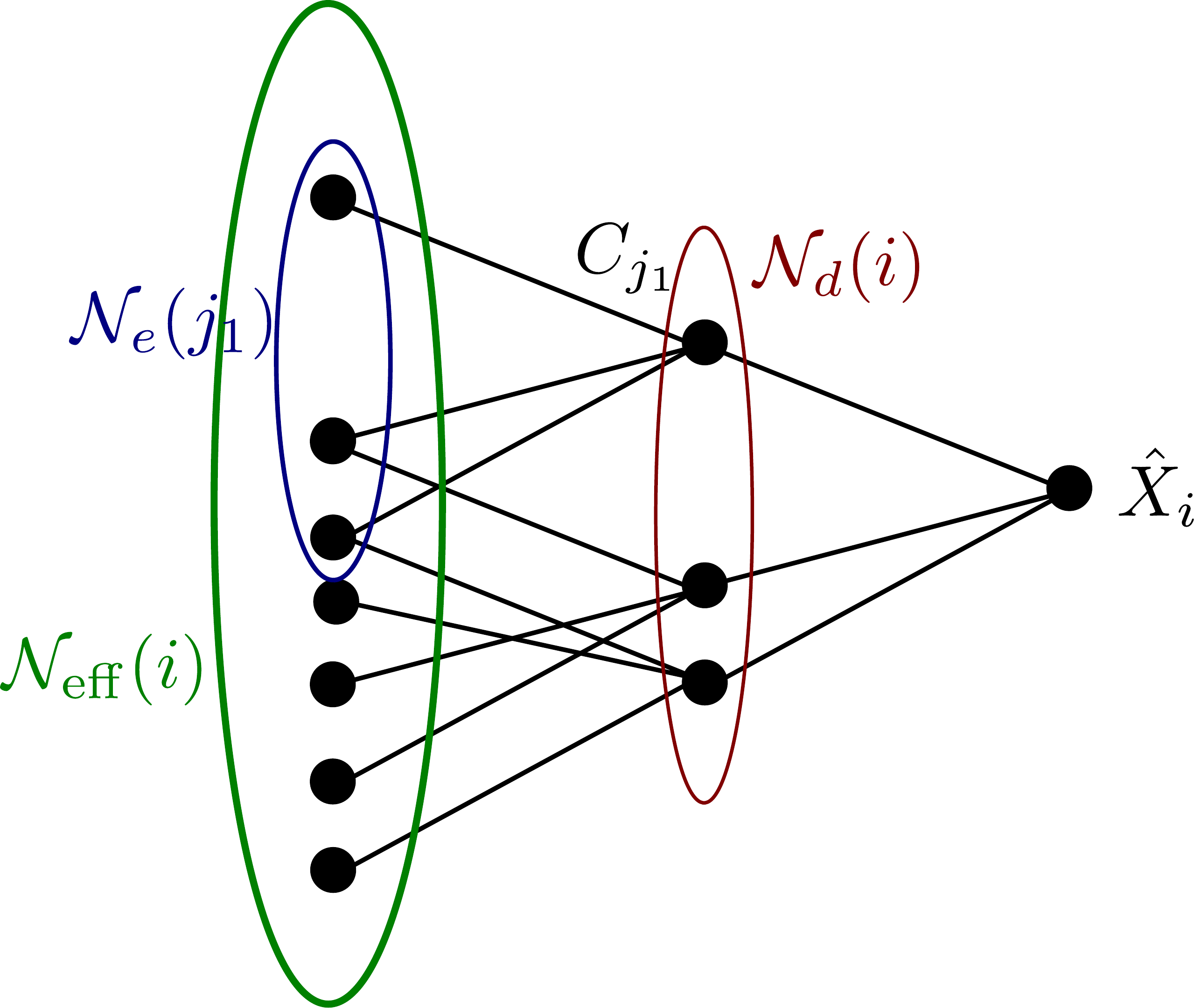}
		\caption{Illustrating various neighbourhoods used in the proofs.}
		\label{fig:graph_neighbourhoods}
	\end{center}
	\vspace{-0.5cm}
\end{figure}


\subsubsection{The local encoding and decoding graphs} 
Our derivation of the lower bound involves analyzing the connectivity properties of two bipartite graphs that describe the local encoding and decoding functions. As we will demonstrate, the various probabilities of error are influenced by the degrees of these bipartite graphs, whose values are governed by $\rwc$, $\ewc$ and $\twc$.
\begin{itemize}
	\item Under (A1) and (A2), the $ j $th compressed bit $ C_j $ can be written as $ C_j = f_j(X_{\cN_e(j)}) $ for some function $ f_j $, where $ \cN_e(j) $ is the set of message locations that $ C_j $ can depend on. 
	\item The $ i $th decoded bit $ \hat{X}_i $ can be written as $ \hat{X}_i =g_i(C_{\cN_d(i)})$ for some function $ g_i $, where $ \cN_d(i) $ is the set of codeword locations that need to be probed in order to recover $ X_i $.
	\item We can construct an $ n\times nR $ encoder bipartite graph $ \cG_e $ where $ (i,j) $ is an edge only if $ i\in \cN_e(j) $, and an $ nR\times n $ decoder bipartite graph $\cG_d$ where $ (j,i)\in [nR]\times n $ is an edge if $ j\in \cN_d(i) $. See Fig.~\ref{fig:encoding_decoding_graph} for an illustration.
	
	\item Let $e_j\defeq |\cN_e(j)|$ be the degree of the $j$th right vertex in $\cG_e$.
	This is equal to the local encodability of the $i$th codeword symbol and satisfies $ e_i\leq \rwc $.
	\item  This gives a natural lower bound on the update efficiency for the $ i $th message symbol: it must be greater than or equal to $\Delta_i^l$, the degree of the $i$th left vertex in $ \cG_e $. 
	
	The average degree of a  vertex in the left (corresponding to a message symbol) is equal to $ R $ times the average degree of a right vertex (which corresponds to codeword bits). This implies that $ \twc $ is lower bounded by the average (arithmetic mean of) the individual local encodabilities of the individual codeword bits.  
	\item Likewise, we denote the degree of the $j$th left vertex of $\cG_d$ by $\Delta_j^r$ and that of the $i$th right vertex in $\cG_d$ by $d_i$. Note that $d_i$ is equal to the local decodability of $\widehat{X}_i$ (maximum number of codeword bits to be probed to recover $\widehat{X}_i$). By definition, $d_i\leq \rwc$ for all $i$.
	\item We also define the so-called effective neighbourhood of $\widehat{X}_i$ to be $$ \cN_{\mathrm{eff}}(i) \defeq \{ X_l:l\in\bigcup_{j\in \cN_d(i)} \cN_e(j) \} .$$ The $ i $th decoded symbol is a function of only those symbols in $ \cN_{\mathrm{eff}}(i) $, i.e., there exists a function $ h_i $ such that  $ \hat{X}_i = h_i(X_{\cN_{\mathrm{eff}}(i)})$. This is illustrated in Fig.~\ref{fig:graph_neighbourhoods}.
\end{itemize}

Let us now obtain a lower bound on the bit error probability of any compression scheme satisfying the above properties. We would like to point out that the following Lemmas~\ref{lemma:lbound_biterror} and~\ref{lemma:local_encodedecdesimult_lb} only make use of assumptions (A1) and (A2), and do not require (A3). 
\begin{lemma}\label{lemma:lbound_biterror}
	The probability of bit error, $ P_e^{(i)}\defeq \Pr[\hat{X}_i\neq X_i] $ satisfies
	\[
	P_e^{(i)} =\begin{cases}
	\geq (1-p)^{|\cN_{\mathrm{eff}}(i)|}\ge (1-p)^{\ewc\rwc}, & \text{ or,}\\
	0. 
	\end{cases}
	\]
\end{lemma}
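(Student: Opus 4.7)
The plan is to exploit the structural decomposition highlighted just before the lemma: under (A1)--(A2) every decoded bit can be written as $\hat X_i = h_i(X_{\cN_{\mathrm{eff}}(i)})$, where $h_i$ is the deterministic function obtained by composing the local encoder functions $\{f_j\}_{j\in\cN_d(i)}$ with the local decoder $g_i$. Consequently the bit-error event $\{\hat X_i \neq X_i\}$ is measurable with respect to the $|\cN_{\mathrm{eff}}(i)|$ source coordinates inside $\cN_{\mathrm{eff}}(i)$ (together with $X_i$), and those coordinates are i.i.d.\ Bernoulli$(p)$.

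The main step is to exhibit a single high-probability witness of decoding error. I would use the event $E := \{X_\ell = 0 \text{ for all } \ell \in \cN_{\mathrm{eff}}(i)\}$; by independence $\Pr[E] = (1-p)^{|\cN_{\mathrm{eff}}(i)|}$. On $E$ the decoder output equals the constant $v := h_i(0^{|\cN_{\mathrm{eff}}(i)|})$, and provided $i \in \cN_{\mathrm{eff}}(i)$ we also have $X_i = 0$ on $E$. Hence two cases: (a) if $v = 1$ then $E$ is itself a bit-error event and $P_e^{(i)} \geq \Pr[E] = (1-p)^{|\cN_{\mathrm{eff}}(i)|}$; (b) if $v = 0$ and $h_i$ coincides with the projection onto the $X_i$-coordinate on \emph{every} input, then $\hat X_i = X_i$ almost surely and $P_e^{(i)} = 0$. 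The second inequality $(1-p)^{|\cN_{\mathrm{eff}}(i)|} \geq (1-p)^{\ewc \rwc}$ is immediate from the counting bound $|\cN_{\mathrm{eff}}(i)| \leq \sum_{j\in\cN_d(i)} |\cN_e(j)| \leq d_i \cdot \ewc \leq \rwc \cdot \ewc$ together with $1-p \leq 1$.

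The main obstacle I foresee is sealing the dichotomy cleanly: in principle $h_i$ could decode $0^{|\cN_{\mathrm{eff}}(i)|}$ correctly yet still err on some lower-probability input, leaving $P_e^{(i)}$ positive but potentially below $(1-p)^{|\cN_{\mathrm{eff}}(i)|}$. I would close this gap by handling the corner case $i \notin \cN_{\mathrm{eff}}(i)$ separately (since $\hat X_i$ is then independent of $X_i$, already forcing $P_e^{(i)} \geq \min(p,1-p)$, which can be folded into the advertised bound by absorbing $X_i$ into the neighbourhood and applying the zero-witness argument to $\cN_{\mathrm{eff}}(i)\cup\{i\}$), and for $i \in \cN_{\mathrm{eff}}(i)$ by arguing that correctness on the zero witness forces $h_i$ to agree with the projection on the entire $\{X_i = 0\}$ sub-cube, with a symmetric argument against the all-one pattern (of probability $p^{|\cN_{\mathrm{eff}}(i)|}$, or the modal pattern after swapping $0\leftrightarrow 1$ if $p > 1/2$) handling the $\{X_i = 1\}$ sub-cube. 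This reduces the statement to the two clean branches $P_e^{(i)} \geq (1-p)^{|\cN_{\mathrm{eff}}(i)|}$ and $P_e^{(i)} = 0$ claimed by the lemma.
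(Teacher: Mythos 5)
Your proposal departs from the paper's proof at precisely the point you flag as the ``main obstacle,'' and the patch you propose does not close it. The paper never commits to a particular witness configuration: it writes $P_e^{(i)}$ as a sum over \emph{all} configurations of $X_{\cN_{\mathrm{eff}}(i)}$ of (probability of the configuration) times (indicator of error), lower bounds the probability of \emph{every} configuration by $(1-p)^{|\cN_{\mathrm{eff}}(i)|}$ using a per-symbol bound $\Pr[X_l=x_l]\geq 1-p$ (strictly this per-symbol bound is $\min(p,1-p)$, which is all that the downstream use in Lemma~\ref{lemma:local_encodedecdesimult_lb} requires), and then reads off the dichotomy directly: either no configuration of the effective neighbourhood produces $\hat X_i\neq X_i$, in which case $P_e^{(i)}=0$, or some configuration does, and that single configuration alone contributes at least $(1-p)^{|\cN_{\mathrm{eff}}(i)|}\geq(1-p)^{\ewc\rwc}$. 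Your all-zeros witness only realizes the positive branch in the special case $h_i(0^{|\cN_{\mathrm{eff}}(i)|})=1$.

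The step you propose to seal the dichotomy---that correctness of $h_i$ on the all-zeros input ``forces $h_i$ to agree with the projection onto the $X_i$-coordinate on the entire $\{X_i=0\}$ sub-cube''---is false, and nothing in (A1)--(A2) supports it. For instance, with $l_1,l_2\in\cN_{\mathrm{eff}}(i)$, the decoder $h_i(x)=x_i\oplus(x_{l_1}\wedge x_{l_2})$ is correct at the all-zeros (and all-ones-swapped) witnesses yet errs exactly when $x_{l_1}=x_{l_2}=1$, so neither your case (a) nor your case (b) applies; such examples also show that, for $p<1/2$, no argument anchored to the all-zeros pattern can recover the stated constant, since the erroneous configuration may have probability as small as $\min(p,1-p)^{|\cN_{\mathrm{eff}}(i)|}$. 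The correct route is the paper's: bound the probability of an \emph{arbitrary} erroneous configuration from below, symbol by symbol, rather than exhibiting a specific one. Your peripheral points are fine and match the paper---the counting bound $|\cN_{\mathrm{eff}}(i)|\leq d_i\,\ewc\leq\rwc\ewc$, and your treatment of the corner case $i\notin\cN_{\mathrm{eff}}(i)$ by absorbing $X_i$ into the neighbourhood is a legitimate refinement of a point the paper glosses over---but the central dichotomy, which is the substance of the lemma, is not established by your argument.
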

\begin{proof}
	For every $ i\in[n] $, the decoded symbol $ \hat{X}_i $ is a deterministic function (the composition of $ g_i $ and $ f_j $'s) of $ \cN_{\mathrm{eff}}(i) \defeq \{ X_l:l\in\bigcup_{j\in \cN_d(i)} \cN_e(j) \} $. But we have $ |\cN_{\mathrm{eff}}(i)|\leq \ewc\rwc $.
	The probability of error is given by\footnote{For an event $ \cE $, $ 1_{\cE} $ is the indicator function which takes value $ 1 $ if $ \cE $ occurs, and zero otherwise.}
	\begin{align*}
	P_e^{i} &= \sum_{x}\left(\prod_{l\in \cN_{\mathrm{eff}}(i)}\Pr[X_l = x_l]\right) 1_{\{X_i=\hat{X}_i\}} \\
	&\geq \sum_{x}\left(\prod_{l\in  \cN_{\mathrm{eff}}(i)}(1-p)\right) 1_{\{X_i=\hat{X}_i\}}.
	\end{align*}
	If there is even a single configuration of $ X^n $ for which $ X_i\neq \hat{X}_i $, then $ P_e^{i}\geq (1-p)^{|\cN_{\mathrm{eff}}(i)|}\geq (1-p)^{\ewc\rwc} $.
\end{proof}

\subsubsection*{Remark}
Observe that Lemma~\ref{lemma:lbound_biterror} holds even without assumption (A$3$). Also, if $R<1$, then by the pigeonhole principle $P_e^{(i)}>0$ for a nonvanishing fraction of $i$'s. Since $\peglob\geq \max_iP_e^{(i)}$, we can conclude that any compression scheme satisfying (A$1$)--(A$2$) and having vanishing probability of error and nontrivial rate must satisfy $\twc\twc=\omega(1)$. In other words, one cannot achieve $(\rwc,\twc)=(O(1),O(1))$.

A similar observation was made in~\cite{makhdoumi_onlocallydecsource} for linear locally decodable (but not updatable) compression schemes. If the encoder of the compressor must be linear, then $\rwc=\omega(1)$ if we desire a vanishingly small probability of error.


We can use the above lemma to prove a lower bound on simultaneous local encodability and decoding.
\begin{lemma}\label{lemma:local_encodedecdesimult_lb}
	Any fixed-length compression scheme achieving $ R<1, $ a vanishing probability of error, and satisfying assumptions (A1) and (A2) must necessarily satisfy
	\[
	\ewc\rwc  = \Omega(\log n)
	\]
\end{lemma}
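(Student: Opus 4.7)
The plan is to combine Lemma~\ref{lemma:lbound_biterror} with a pigeonhole argument using $R<1$ and an independence argument on disjoint effective neighborhoods.

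First, let $S=\{i:P_e^{(i)}=0\}$. For $i\in S$, determinism under (A1)--(A2), together with the fact that every $x^n$ has positive Bernoulli$(p)$ probability, forces $X_i$ to be a deterministic function of $C^{nR}$ for every realization. Holding $x_{S^c}$ fixed and varying $x_S$, the induced map $x_S\mapsto C^{nR}$ must therefore be injective into $\{0,1\}^{nR}$ (two distinct values of $x_S$ would be indistinguishable by the decoder, contradicting perfect recovery on $S$), so $2^{|S|}\le 2^{nR}$ and hence $|S^c|\ge n(1-R)=\Theta(n)$. By Lemma~\ref{lemma:lbound_biterror} each $i\in S^c$ satisfies $P_e^{(i)}\ge(1-p)^{\ewc\rwc}$.

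Next, I would extract a subset $T\subseteq S^c$ whose effective neighborhoods $\{\cN_{\mathrm{eff}}(i)\}_{i\in T}$ are pairwise disjoint. Because each bit-error event $E_i=\{\widehat X_i\ne X_i\}$ depends only on $X_{\cN_{\mathrm{eff}}(i)}$ and $X^n$ is i.i.d., the events $\{E_i\}_{i\in T}$ are then mutually independent and
\[
\peglob\;\ge\;1-\prod_{i\in T}\bigl(1-P_e^{(i)}\bigr)\;\ge\;1-\exp\!\bigl(-|T|\,(1-p)^{\ewc\rwc}\bigr).
\]
For $\peglob=o(1)$ this forces $|T|(1-p)^{\ewc\rwc}=o(1)$, i.e., $\ewc\rwc\ge (1+o(1))\log|T|/\log\tfrac{1}{1-p}$. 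Establishing $|T|=n^{\Omega(1)}$ would then deliver the target bound $\ewc\rwc=\Omega(\log n)$.

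The main obstacle is producing such a $T$. Regard the hypergraph $\mathcal{H}$ on $[n]$ whose hyperedges are $\{\cN_{\mathrm{eff}}(i):i\in S^c\}$, each of size at most $\ewc\rwc$. Greedy packing yields $|T|\ge|S^c|/(\ewc\rwc\cdot\max_l m_l)$, where $m_l$ counts the hyperedges through vertex $l$. Under (A3) one can bound $\max_l m_l=O(\ewc\rwc)$ and obtain $|T|=\Omega(n/(\ewc\rwc)^2)$ directly, which suffices. Without (A3), the global identity $\sum_l m_l=\sum_{i\in S^c}|\cN_{\mathrm{eff}}(i)|\le n\cdot\ewc\rwc$ controls only the \emph{average} of $m_l$ while an individual $m_l$ could in principle be as large as $n$. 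Resolving this requires a layered pruning step that first discards vertices $l$ with $m_l$ above a $\poly(\ewc\rwc)$ threshold (losing only a controlled fraction of hyperedges, via a layer-cake bound on $\sum_l m_l$) and then runs greedy packing on the residual low-degree subhypergraph; calibrating this pruning tightly enough to preserve $|T|=n^{\Omega(1)}$ whenever $\ewc\rwc=o(\log n)$ is the delicate technical hurdle.
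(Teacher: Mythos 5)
Your high-level plan is exactly the one the paper follows: invoke Lemma~\ref{lemma:lbound_biterror}, use $R<1$ plus a pigeonhole argument to certify $\Theta(n)$ indices with nonzero bit-error probability, then extract a large family of indices whose effective neighborhoods $\cN_{\mathrm{eff}}(i)$ are pairwise disjoint, so that the corresponding error events are mutually independent and the block error cannot vanish unless $\ewc\rwc=\Omega(\log n)$. Your injective-map derivation of $|S^c|\ge n(1-R)$ is a correct (and more explicit) unpacking of what the paper calls ``pigeonhole,'' and the tail computation $\peglob\ge 1-\exp(-|T|(1-p)^{\ewc\rwc})$ together with the observation that $|T|=n^{\Omega(1)}$ suffices are both right.

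Where you stop, however, is precisely where a real difficulty lies, and you have identified it correctly. The paper's greedy builds $\cS$ by repeatedly adding any $i\in\cE$ with $i\notin\bigcup_{i_1\in\cS}\cN_{\mathrm{eff}}(i_1)$; this termination condition does yield $|\cS|\ge n(1-R)/(\ewc\rwc)$, but it only guarantees that later-added indices avoid earlier neighborhoods --- it does \emph{not} make the sets $\cN_{\mathrm{eff}}(i)$, $i\in\cS$, pairwise disjoint, which is what the paper then asserts to get independence. A genuine packing of pairwise-disjoint neighborhoods requires discarding, at each step, every $i'$ whose neighborhood merely \emph{intersects} the chosen ones, and the number so discarded is controlled by the codegree $m_l=|\{i:l\in\cN_{\mathrm{eff}}(i)\}|$, which (A1)--(A2) alone do not bound: a single $X_l$ could lie in $\Omega(n)$ effective neighborhoods, collapsing the maximal packing to $O(1)$. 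Your hypergraph-pruning sketch is the right instinct, but you did not complete it, and it is not obvious it can be completed under only (A1)--(A2). One further caution on your parenthetical: (A3) controls the left degrees $\Delta_j^r$ of $\cG_d$ and the right degrees of $\cG_e$, but not the left degrees $\Delta_l^l$ of $\cG_e$, and since $m_l\le\Delta_l^l\max_j\Delta_j^r$, (A3) does not by itself give $\max_l m_l=O(\ewc\rwc)$ (in the product-of-$\ewc\rwc$ setting there is no $\twc$ to bound $\Delta_l^l$). In short: same route as the paper, a legitimate gap correctly flagged --- one that the paper's own proof also glosses over --- but the argument as you leave it is incomplete.
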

\begin{proof}
	If we operate at a rate $ R<1 $, then by the pigeonhole principle, it must be the case that at least $ n(1-R) $ bits have a nonzero bit error probability. Call this set of bits/locations $ \cE $.
	
	We will now select a subset $ \cS $ of $ O(n/\ewc\rwc) $ many $ X_i $'s whose probabilities of error are all nonzero but whose error events are statistically independent of each other. We can construct such an $ \cS $ in a greedy manner.
	
	For every $ i\in\cE $, the corresponding $ \hat{X}_i $ is a deterministic function of $ \{X_l:l\in \cN_{\mathrm{eff}}(i)\} $. In other words, each $ \hat{X}_i $ depends on at most $ \ewc\rwc $ many $ X_l $'s. We can start with the empty set, and iteratively put those $ X_i $'s in $ \cS $ which are in $ \cE $ but not in $ \bigcup_{i_1\in \cS} \cN_{\mathrm{eff}}(i_1)$. It is clear that $$ |\cS|\geq n(1-R)/(\rwc\ewc) .$$ Moreover, by construction, we have that all the events $ 1_{X_{i}\neq \hat{X}_i} $ are independent for all $ i \in\cS$ (since the corresponding sets of $ X_l $'s they depend on are disjoint).
	
	Let $ m\defeq n(1-R)/(\ewc\rwc) $.
	Using Lemma~\ref{lemma:lbound_biterror}, the probability of block error is bounded as
	\begin{align*}
	P_e &\geq 1-(1-(1-p)^{\rwc\ewc})^{m} \geq 1-e^{-m(1-p)^{\rwc\ewc}}\\
	&\geq m(1-p)^{\rwc\ewc}.
	\end{align*}
	The above quantity is vanishing in $ n $ only if $ \rwc\ewc = \Omega(\log n) $
\end{proof}

\subsection{Lower bound on $ \rwc\twc $ and proof of Theorem~\ref{thm:dwc_uwc_lbound_adaptive}}
We can in fact tighten the analysis above to prove an identical lower bound on $ \rwc\twc $.


Let $ \eta_i\coloneq \cN_{\mathrm{eff}}(i) $. Let $ u_i,e_j,d_i $ respectively denote the update efficiency for the $ i $th message symbol, the local encodability of the $ j $th codeword symbol, and the local decoding of the $ i $th message symbol. These are also respectively greater than or equal to the degrees of the $ i $th left vertex in $ \cG_e $, the $ j $th right vertex in $ \cG_e $, and the $ i $th right vertex in $ \cG_d $.

\begin{lemma}\label{lemma:eta_i_bound}
	Consider any fixed-length compression scheme achieving vanishing probability of error and nontrivial compression rate $ R<1 $.
	There exists a set $ \cS\subset[n] $ of message symbols with  $ \cS = \Theta(n) $ such that for all $ i\in \cS $,
	\[
	 \eta_i =\Omega(\log n).
	\]  
\end{lemma}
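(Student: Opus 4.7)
The plan is to prove the lemma by contradiction via a greedy disjoint-family argument (strengthening Lemma~\ref{lemma:local_encodedecdesimult_lb}), combined with a reverse-Markov step to convert a worst-case statement into a positive-fraction one.

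Suppose toward contradiction that $|\{i\in[n]:\eta_i\geq c\log n\}|<\alpha n$ for constants $c>0$ and $\alpha\in(0,1-R)$ to be chosen later. Together with the pigeonhole bound $|\cE|\geq n(1-R)$ already used in the proof of Lemma~\ref{lemma:local_encodedecdesimult_lb}, the ``bad'' set $\cB\defeq\{i\in\cE:\eta_i<c\log n\}$ has size at least $(1-R-\alpha)n=\Theta(n)$. For each $i\in\cB$, Lemma~\ref{lemma:lbound_biterror} gives $P_e^{(i)}\geq(1-p)^{c\log n}=n^{-c'}$ with $c'\defeq c\log_2\tfrac{1}{1-p}$; picking $c$ small forces $c'<1$.

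The next step is to control the ``collision degree'' $D\defeq\max_{l\in[n]}|\{i:l\in\cN_{\mathrm{eff}}(i)\}|$ using assumption~(A3). From (A3) applied to $\cG_e$ and $\cG_d$ one gets $\max_j e_j=O(\twc)$ and $\max_j\Delta_j^r=O(\rwc)$, using the identities $\sum_j e_j=\sum_i\Delta_i^l\leq n\twc$ and $\sum_j\Delta_j^r=\sum_i d_i\leq n\rwc$. Since any $X_l$ influences $\widehat X_i$ only through some $j$ with $l\in\cN_e(j)$ and $j\in\cN_d(i)$, and $|\cN_e^{-1}(l)|\leq u_l\leq\twc$, one obtains $D\leq\twc\cdot\max_j\Delta_j^r=O(\rwc\twc)$. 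I would then greedily build $\cS^*\subseteq\cB$ with pairwise disjoint $\cN_{\mathrm{eff}}(i)$'s: at each step, pick an unmarked $i\in\cB$, add it to $\cS^*$, and mark every $i'$ whose $\cN_{\mathrm{eff}}(i')$ shares an index with $\cN_{\mathrm{eff}}(i)$. Each iteration marks at most $|\cN_{\mathrm{eff}}(i)|\cdot D=O(c\,\rwc\twc\log n)$ elements, hence $|\cS^*|=\Omega(n/(\rwc\twc\log n))$. By disjointness and the product source, the error events $\{\widehat X_i\neq X_i\}_{i\in\cS^*}$ are mutually independent, giving
\[
P_e\;\geq\;1-\exp\bigl(-|\cS^*|\cdot n^{-c'}\bigr)\;=\;1-\exp\bigl(-\Omega\bigl(n^{1-c'}/(\rwc\twc\log n)\bigr)\bigr).
\]
Whenever $\rwc\twc=o(n^{1-c'}/\log n)$, the exponent diverges, directly contradicting $P_e=o(1)$.

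The main obstacle is the complementary regime $\rwc\twc=\Omega(n^{1-c'}/\log n)$, in which the greedy alone is vacuous. In that regime I would argue separately via averaging: (A3) forces a constant fraction of the $j$'s to satisfy $e_j=\Theta(\twc)$ and, similarly, a constant fraction to satisfy $\Delta_j^r=\Theta(\rwc)$, and one needs to verify that these two ``heavy'' subsets of $[nR]$ overlap enough under (A3) to produce a matching \emph{lower} bound $\sum_i\eta_i=\Omega(n\rwc\twc)$. Combined with the upper bound $\max_i\eta_i=O(\rwc\twc)$ from the composition argument above, a reverse-Markov step then yields $|\{i:\eta_i\geq\bar\eta/2\}|=\Omega(n)$ with $\bar\eta\geq\Omega(\rwc\twc)\gg\log n$, completing the claim a fortiori. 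The principal technical subtlety lies in establishing this lower bound on $\sum_i\eta_i$ from (A3) alone—controlling the potential anti-correlation between the degree sequences $(e_j)$ and $(\Delta_j^r)$—and in choosing the constants $c,\alpha$ consistently across both regimes.
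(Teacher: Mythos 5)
Your approach is genuinely different from the paper's. The paper's (sketched) argument reselects the set $\cS'$ of indices with $P_e^{(i)}>0$, asserts $P_e\geq 1-\prod_i(1-P_e^{(i)})\geq\sum_{i\in\cS'}(1-p)^{\eta_i}$, and concludes directly; it makes no explicit appeal to a collision-degree bound or to (A3), in keeping with the paper's stated intent that these lemmas (like Lemma~\ref{lemma:local_encodedecdesimult_lb}) rest only on (A1)--(A2). (That chain, as written, is itself questionable: the first inequality needs some independence of the error events, and the second inequality actually goes the wrong way---$1-\prod(1-p_i)\leq\sum p_i$ by the union bound---so the paper is implicitly leaning on a disjoint-neighbourhood argument as in Lemma~\ref{lemma:local_encodedecdesimult_lb}.) Your proof makes the disjointness explicit via a greedy packing argument and invokes (A3) to bound the collision degree $D=O(\rwc\twc)$; this part is correct and is arguably a cleaner rendition of what the paper must have had in mind. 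The derived contradiction in the regime $\rwc\twc=o(n^{1-c'}/\log n)$ is sound.

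However, the complementary regime $\rwc\twc=\Omega(n^{1-c'}/\log n)$ is a genuine gap, not a technical subtlety. Your proposed fix---deducing $\sum_i\eta_i=\Omega(n\,\rwc\twc)$ from (A3) so that a reverse-Markov step gives $\eta_i\gg\log n$ for a constant fraction of $i$---does not follow. The quantity $\eta_i=\bigl|\bigcup_{j\in\cN_d(i)}\cN_e(j)\bigr|$ is only \emph{upper}-bounded by products of degrees; as a lower bound one only has $\eta_i\geq\max_{j\in\cN_d(i)}e_j$, and $\cN_d(i)$ can be arranged to hit only low-$e_j$ codeword bits (e.g., when $d_i$ is small). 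Assumption (A3) constrains the degree sequences $(e_j)$ and $(\Delta_j^r)$ only through their average-to-max ratios; it says nothing about how $\cN_d(i)$ intersects the ``heavy'' $j$'s, nor about the overlaps among the $\cN_e(j)$'s. So the anti-correlation you flag as the ``principal technical subtlety'' cannot be controlled by (A3) alone, and the regime-two argument needs a different idea. (One way out is to note that if one's ultimate goal is Theorem~\ref{thm:dwc_uwc_lbound}, then in the complementary regime the desired inequality $\rwc\twc=\Omega(\log n)$ already holds trivially; but Lemma~\ref{lemma:eta_i_bound} itself, as a standalone statement about $\eta_i$, is not established by that observation.)
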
 
The proof is very similar to that of Lemma~\ref{lemma:local_encodedecdesimult_lb}, so we only sketch the details. We select a subset $ \cS' $ of $ [n] $ such that $ P_e^i>0 $ for all $ i\in \cS' $. We know that $ |\cS'|\geq n(1-R) $. 
A more careful rederivation of  Lemma~\ref{lemma:lbound_biterror} gives us $ P_e^i \geq (1-p)^{\eta_i} $ for all $ i\in\cS' $. Hence,
\begin{align}
P_e &\geq 1- \prod_{i=1}^n(1-P_e^i) \\
     & \geq \sum_{i\in\cS'} (1-p)^{\eta_i},
\end{align}
which is nonvanishing in $ n $ if $ \eta_i <\log n $ for any subset of $ \cS' $ of size $ n(1-R)/2 $. Therefore, there must  exist a subset $ \cS\subset \cS' $ of size $ n(1-R)/2 $ where $ \eta_i >\log n$ for all $ i\in \cS $. This completes the proof. \qed

This leads us to the following result:
\begin{theorem}\label{thm:dwc_uwc_lbound}
	Any fixed-length compression scheme achieving vanishing probability of error and  $ R<1 $, and satisfying assumptions (A1)---(A3) and nonadaptive local algorithms must have
	\[
	\rwc\twc = \Omega(\log n).
	\]
\end{theorem}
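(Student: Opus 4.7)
The plan is to combine Lemma~\ref{lemma:local_encodedecdesimult_lb}, which already yields $\ewc\rwc = \Omega(\log n)$ using only (A1) and (A2), with a short degree-counting argument on the encoding graph $\cG_e$ that invokes (A3) to replace $\ewc$ by $\twc$. Multiplying the two bounds then gives the theorem.

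First I would invoke Lemma~\ref{lemma:local_encodedecdesimult_lb} directly, since its hypotheses are a subset of those of Theorem~\ref{thm:dwc_uwc_lbound}. This reduces the task to proving that $\twc = \Omega(\ewc)$ under (A3), after which the desired bound follows by a single multiplication.

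For the degree-counting step, recall that $\cG_e$ has $n$ left vertices and $nR$ right vertices; the worst-case right degree is exactly $\ewc$ by definition, and the $i$th left degree $\Delta_i^l$ satisfies $\Delta_i^l \leq \twc$. Assumption (A3) guarantees that the average right degree is at least $c\ewc$ for some absolute constant $c>0$, so the handshake identity
\[
\sum_{i=1}^n \Delta_i^l \;=\; \sum_{j=1}^{nR} e_j \;\geq\; nR \cdot c\ewc
\]
implies that the average left degree is at least $cR\ewc$. Since $\twc$ upper bounds the maximum (and therefore the average) left degree, we get $\twc \geq cR\ewc$. Because we operate in the regime $H(p) \leq R < 1$ with $R$ bounded away from zero, this yields $\twc = \Omega(\ewc)$, and multiplying by $\rwc$ gives $\rwc\twc = \Omega(\log n)$.

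I do not anticipate a significant technical obstacle once Lemma~\ref{lemma:local_encodedecdesimult_lb} is in hand; the only delicate point is the indispensability of (A3) in the reduction from $\ewc$ to $\twc$. Without the lower bound on the average-to-worst-case ratio of right degrees in $\cG_e$, the worst-case local encodability could be concentrated on a vanishing fraction of codeword bits while the typical right degree is $o(\ewc)$, in which case the handshake identity would no longer control $\twc$ by $\ewc$. This mirrors the gap discussed after Theorem~\ref{thm:dwc_uwc_lbound_adaptive} and explains why removing (A3) would require a genuinely new idea rather than a refinement of the argument above.
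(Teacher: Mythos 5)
Your argument is correct, but it takes a genuinely different route from the paper's. The paper never routes through $\ewc$ in this theorem: it first proves Lemma~\ref{lemma:eta_i_bound} (a per-symbol bound $\eta_i=\Omega(\log n)$ for a set $\cS$ of $\Theta(n)$ message symbols), then sums these effective-neighbourhood sizes and bounds $\sum_i \eta_i$ by (a constant times) $\rwc\sum_j e_j$ --- this is the step where (A3) enters on the paper's side, through its condition on the left vertices of $\cG_d$, which keeps the worst-case degrees $\Delta_j^r$ within a constant factor of their average $O(\rwc/R)$ --- and finally uses the handshake identity $\sum_j e_j=\sum_i\Delta_i^l\leq\sum_i u_i\leq n\twc$ to conclude $n\,\Omega(\log n)\leq n\,\rwc\twc$. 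You instead take Lemma~\ref{lemma:local_encodedecdesimult_lb} as a black box and convert $\ewc$ into $\twc$ using the other half of (A3), the condition on the right vertices of $\cG_e$: average right degree at least $c\,\ewc$, handshake, and $\twc\geq\max_i\Delta_i^l\geq\frac{1}{n}\sum_i\Delta_i^l\geq cR\,\ewc$. Both arguments are sound within the paper's framework (in particular both rely on $u_i\geq\Delta_i^l$); yours is shorter once Lemma~\ref{lemma:local_encodedecdesimult_lb} is available and makes the role of (A3) very explicit, while the paper's works directly with $\eta_i$ and $u_i$ and thus never needs the worst-case quantity $\ewc$. The one point you should spell out is the dependence on $R$ in your final constant: you need $R$ bounded away from zero, which is not among the stated hypotheses but follows from the converse to the source coding theorem (vanishing error forces $R\geq H(p)-o(1)>0$ for a nondegenerate Bernoulli($p$) source); you flag this, and the paper's own constants (e.g.\ $|\cS|=n(1-R)/2$ and the implicit $1/(cR)$ in its use of (A3)) carry the same kind of dependence, so this is a minor shared caveat rather than a gap.
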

\begin{proof}
	From Lemma~\ref{lemma:eta_i_bound}, there exists a set $ \cS $ of size $ \Theta(n) $ such that for all $ i\in\cS $, $ \eta_i  = \Omega(\log n) $. However, $ \eta_i\leq \rwc \sum_{j\in\cN_d(i)}e_j $. From the graph $ \cG_e, $ we also have 
	\begin{equation}\label{eq:dwcuwc_1}
	\rwc\sum_{j\in [nR]}e_j = \rwc \sum_{i\in [n]}\Delta_i^l \leq   \rwc \sum_{i\in [n]}u_i \leq n\rwc\twc, 
	\end{equation}
	where $ \Delta_i^l $ denotes the degree of the $ i $th left vertex in $ \cG_e $.
	
	Using Lemma~\ref{lemma:eta_i_bound}, we have 
	\[
	 |\cS| \Omega(\log n) \leq \sum_{i\in \cS} \eta_i \leq \sum_{i=1}^n \eta_i \leq \rwc \sum_{j\in[nR]} e_j.
	\]
	From \eqref{eq:dwcuwc_1} and using the fact that $ |\cS| = \Theta(n) $, we have
	$
	n\rwc\twc = \Omega(n\log n)  
	$,
	which implies 
	$
	\rwc\twc = \Omega(\log n),
	$
	completing the proof.
\end{proof}

 Using Theorem~\ref{thm:dwc_uwc_lbound} and our remark on adaptive vs nonadaptive schemes just prior to Theorem~\ref{thm:dwc_uwc_lbound_adaptive}, we get Theorem~\ref{thm:dwc_uwc_lbound_adaptive}.

\bibliographystyle{ieeetr}
\bibliography{locality_references}
\end{document}